\newcommand{\ii}{\ensuremath{\mathrm{i}}}
\newcommand{\e}{\ensuremath{\mathrm{e}}}
\newcommand{\R}{\ensuremath{\mathrm{R}}}
\newcommand{\Arctan}{\ensuremath{\mathrm{Arctan\!}}}
\definecolor{darkgreen}{rgb}{0,0.35,0}
\definecolor{Rood}{rgb}{1, 0, 0}
\newtheorem{theorem}{Theorem}[section]
\newtheorem{lemma}[theorem]{Lemma}
\newcommand{\cecs}{Centro de Estudios Cient\'{\i}ficos (CECs), Casilla 1469, Valdivia, Chile}
\newcommand{\ucharles}{Faculty of Mathematics and Physics, Charles University, V Hole\v{s}ovi\v{c}k\'ach 2, 18000 Prague 8, Czech Republic}
\newcommand{\kulak}{Department of Physics, KU Leuven Campus Kortrijk--Kulak, Etienne Sabbelaan 53 Bs 7657, 8500 Kortrijk, Belgium}
\newcommand{\ELI}{Institute of Physics of the ASCR, ELI Beamlines Project, Na Slovance 2, 18221 Prague, Czech Republic}
\newcommand{\infn}{ INFN, Sezione di Napoli, Complesso Universitario di Monte S.~Angelo, Via Cintia Edificio 6, 80126 Naples, Italia}
\newcommand{\napoli}{Dipartimento di Matematica e Applicazioni "R. Caccioppoli", Universit\'{a} di Napoli Federico II, Complesso Universitario di Monte S.~Angelo,  Via Cintia Edificio 6,
80126 Naples, Italia }
\newcommand{\uach}{Universidad Austral de Chile (UACh), Campus Isla Teja, Ed. Emilio Pugin}
\newcommand{\ughent}{Ghent University, Department of Physics and Astronomy, Krijgslaan 281-S9, 9000 Ghent, Belgium}
\newcommand{\uff}{Instituto de F\'isica, Universidade Federal Fluminense, Av. Litoranea s/n, 24210-346, Niter\'oi, RJ, Brasil}
\begin{document}

\title{Extracting topological information from momentum space propagators}

\author{Fabrizio Canfora}
\email{fcanforat@gmail.com}
\affiliation{\cecs}
\author{David Dudal}
\email{david.dudal@kuleuven.be}
\affiliation{\kulak}
\affiliation{\ughent}
\author{Alex Giacomini}
\email{alexgiacomini@uach.cl}
\affiliation{\uach}
\author{Igor F.~Justo}
\email{igorfjusto@gmail.com}
\affiliation{\uach}
\affiliation{\uff}
\author {Pablo Pais}
\email{pais@ipnp.troja.mff.cuni.cz}
\affiliation{\ucharles}
\affiliation{\ELI}
\author{Luigi Rosa}
\email{rosa@na.infn.it}
\affiliation{\napoli}
\affiliation{\infn}

\begin{abstract}
A new topological invariant quantity, sensitive to the analytic structure of both fermionic and bosonic propagators, is proposed. The gauge invariance of our construct is guaranteed for at least small gauge transformations. A generalization compatible with the presence of complex poles is introduced and applied to the classification of propagators typically emerging from non-perturbative considerations. We present partial evidence that the topological number can be used to detect chiral symmetry breaking or deconfinement.
\end{abstract}

\maketitle


\section{Introduction}
\label{Section_Introduction}

Topology  has been used since a long time in the study of condensed matter physics and, nowadays, it has become the main theoretical
tool for the description and classification of topological insulators (see \cite{Qi:2011zya} for a comprehensive review). This is because it is possible to construct topological invariant quantities, i.e.~, quantities that are invariant under ``smooth deformations'' (homeomorphisms). In this manner the space of parameter-dependent results is naturally divided into disjoint sectors, invariant with respect to homeomorphisms, that are characterized by different values for the topological invariants. Also in a particle physics, or more general, (quantum) field theory context, the importance of topology cannot be underestimated, see \cite{Rajaraman:1982is} for the typical illustrative examples.

From the physical point of view, different sectors correspond to different  physical phases. Similarly, momentum space topology (MST) has been applied to the classification of the ground state of relativistic quantum field theories (see \cite{So:1985wv}, where MST is first applied to lattice fermions) into universality classes, however without the same resonance as in condensed matter \cite{Horava:2005jt,Volovik:2011kg,Volovik:2009au,Volovik:2006gt,Zubkov:2012ws,Zubkov:2016llc}. The MST framework might be a very powerful tool and has also been applied to the investigation of emergent gravity, \cite{Klinkhamer:2007pe,Volovik:2009av}. Indeed relativistic quantum fields share some topological properties with topological materials \cite{Horava:2005jt}. The Standard Model (SM) below the Electroweak scale, for example,  belongs to the same universality class as of the superfluid $^{3}$He and of the three-dimensional topological insulators obeying time-reversal symmetry \cite{Volovik:2011kg}. However, up to now, topological invariants constructed in MST have not been applied to bosonic systems. In a sense, this could be interpreted as one of the major drawbacks of MST, which only applies to ``half of the world'', for reasons that will become clear in the next sections. On the contrary, it would be desirable  to apply MST to bosonic fields, especially considering that the analytic structure in momentum space of bosonic propagators might encode extremely valuable physical information.

Inspired by the power of MST in the classification of the vacuum of quantum field theories, we propose a new topologically invariant object that is not only sensitive to the full analytic structure of fermionic propagators, but that can also be applied to bosonic field propagators. Particularly, we will consider here the case of certain classes of fits to lattice fermion propagators \cite{Furui:2006ks,Dudal:2013vha,Burgio:2012ph,Rojas:2013tza} and gluon propagators \cite{Dudal:2010tf,Cucchieri:2011ig,Dudal:2012zx,Dudal:2018cli}, although the proposed tool can be applied to any relativistic fermionic or bosonic propagator. Hence, the present results give rise to a considerable enhancement of the power, as well as applicability, of MST. The paper is organized as follows: in Section \ref{Section_momentum_space_TI} we recall the definition and properties of the momentum space topological invariant. In Section \ref{Section_fermionic} we extract, departing from the standard definition, a specific representation for the MST invariant that is very useful for a generalization to the bosonic sector. In Section \ref{Section_Generalized_TI}, we generalize this definition to make it compatible with the potential presence of complex poles and we apply this latest definition to several concrete propagators in Section \ref{Section_Applications}. Finally, in Section \ref{Section_Conclusions}, our conclusions can be found.

\section{The momentum space topological invariant}
\label{Section_momentum_space_TI}

Let us start with the following topological quantity, to our knowledge first introduced by So \cite{So:1985wv}, see also \cite{Zubkov:2016llc,Katsura:2016wqa,Volovik:2016mre},
\begin{equation}
\mathcal{N}_{3}~=~N\mathrm{Tr}\int_{\Sigma }K~G\mathrm{d}G^{-1}\wedge G\mathrm{d}G^{-1}\wedge
G\mathrm{d}G^{-1}\;,
\label{topinv}
\end{equation}%
where $K$ stands for the matrix representation of a vacuum symmetry; $G$ is the two-point Green function of the fermion field; the integral is on a
three/dimensional hypersurface $\Sigma $, in our case defined by imposing $p_{0}=0$. From now on, we assume the following notation: $p^{2}=p_{i}p^{i}$ and
$\slashed{p}=\gamma_{i}p^{i}$, with $i=1,2$ and $3$ (the spatial indices); finally, $N$ is a normalization factor. Notice that the foregoing implies that $[K,G]=0$. A rather similar, albeit not exactly the same, topological invariant was also studied in the lattice context of \cite{Coste:1989wf}.

At first glance, the topological invariant defined above can be constructed only for fermion propagators, since the $\gamma $-matrices are indispensable to
get a non-trivial result. A direct computation reveals that if one tries to make sense of the above expression for bosonic propagators with internal indices, the result vanishes identically while for fermions the $\gamma$-matrices save the day by generating the completely antisymmetric tensor.

Our main goal now for the following sections is to propose a new, and more general, topological invariant
that can also be applied to bosonic fields.

If there would be no interactions between the fermions, the quantity \eqref{topinv} becomes nothing else than the number of massive
flavors of Dirac fermions \cite{Zubkov:2012ws}. Adding interactions in such a way that these do not spoil the condition $[G,K]=0$, the change of Eq.~\eqref{topinv} under a continuous deformation $G\to G + \delta G$, where
$\delta G$ encodes the variations of the parameters in $G$, can be written as
\begin{equation*}
\delta \mathcal{N}_{3} \;=\; 3\; N\; \int_{\Sigma} \mbox{Tr}\left[K \; \delta[G \mathrm{d}G^{-1}] \wedge G\mathrm{d}G^{-1} \wedge G\mathrm{d}G^{-1}\right] \;,
\end{equation*}
at least, if the three-form $\delta[G \mathrm{d}G^{-1}] \wedge G\mathrm{d}G^{-1} \wedge G\mathrm{d}G^{-1}$ is continuous inside $\Sigma$, according to the Leibniz integration rule
\cite[p.~466]{Courant_2}. Thus, to ensure $\delta \mathcal{N}_{3}=0$, it is necessary to have
\begin{equation}\label{variation_eq}
3\;N\; \int_{\partial\Sigma} \mbox{Tr}\left[K \; (G\delta G^{-1})  G\mathrm{d}G^{-1} \wedge G\mathrm{d}G^{-1}\right] \;=\; 0\;,
\end{equation}
where, as $\Sigma\equiv S^{3}$ in momentum space, $\partial\Sigma$ is the spherical two-dimensional surface with radius $|\vec{p}|$, whilst
$|\vec{p}|\to+\infty$. Eq.~\eqref{variation_eq} is the most general condition that any Green function $G$ must fulfill so that
$\mathcal{N}_{3}$ is invariant under small (continuous) deformations of $G$. In  Section \ref{Section_fermionic}, we shall derive a more specific condition for a particular, but general enough, class of fermionic
propagators. These have been used to fit lattice data for the quark propagator in QCD; see \cite{Furui:2006ks,Dudal:2013vha,Burgio:2012ph,Rojas:2013tza}.

A priori, one might question the physical relevance of the  topological invariants discussed here as they are explicitly based on gauge variant input, viz.~propagators that will depend on a chosen gauge. In practice, this
will boil down to a possible dependence on a gauge parameter. Indeed, changing the gauge parameter can always be achieved via consecutive application of small (infinitesimal) gauge variations (or BRST variation, if you
wish), which is just a special class of continuous deformations leaving the topological number untouched.  We will have nothing to say about large gauge transformations and invariance of ${\cal N}_3$ w.r.t.~those.

\section{Topological invariant for fermionic case: old and new results}
\label{Section_fermionic}

In this Section we will apply the previous ideas to the case of a fermion propagator in a $SU(N)$ gauge theory, with most general parameterization (assuming Lorentz invariance of course):
\begin{equation}\label{slrigh}
{G}^{AB}(p)=\mathcal{G}(p)\;\delta ^{AB}~=~\frac{Z(p^{2})}{i\slashed{p}+%
\mathcal{M}(p^{2})}\delta ^{AB}\;,
\end{equation}
where $AB$ are internal Dirac fermion indices\footnote{Usually, fermions belong to the fundamental representation of the internal $SU(N)$
group.}. It can be shown that the boundary condition \eqref{variation_eq} is recovered if \cite{Zubkov:2012ws}
\begin{equation}\label{condition_limit}
\lim_{|\vec{p}|\to+\infty} \; \frac{\delta\mathcal{M}}{|\vec{p}|\left(1+\frac{{\mathcal{M}}^{2}}{p^{2}}\right)^{2}}\;=\;0 \;.
\end{equation}
For this kind of fermionic propagators and up to powers of logarithms as dictated by the renormalization group equation, the dynamical mass $\mathcal{M}(p^{2})$ tends to a constant mass $\mu$ (possibly zero) in the UV
limit
($p^{2}\rightarrow \infty $), while the renormalization function $Z(p^{2})$ goes to $1$ in such a limit. Therefore, when
$p^{2}\rightarrow \infty$ the standard UV fermionic propagator $\frac{\delta^{AB}}{\ii\slashed{p}+\mu}$
is restored, up to logarithmic terms. On the other hand, in the IR limit ($p^{2}\rightarrow 0$) one usually assumes that $\mathcal{M}(p^{2})$ continuously tends to a
constant value $\mathcal{M}_{0}$. This can be appreciated from e.g.~lattice, functional or fitting approaches,
\cite{Braun:2014ata,Aguilar:2010cn,Alkofer:2008tt,Bashir:2012fs,Oliveira:2018lln,Bowman:2005vx,Burgio:2012ph,Dudal:2013vha,Furui:2006ks}.

Assuming the above mentioned asymptotic behavior of $Z(p^{2})$ and $\mathcal{M}(p^{2})$, one can verify directly that propagators of the kind
\eqref{slrigh} fulfil condition \eqref{condition_limit}.

Then, following \cite{Volovik:2009au}, the matrix element $K$ will be considered as being the one that accounts for the CT-symmetry, representable by
$\gamma^{5}\gamma^{0}$ (the four-dimensional Euclidean Dirac matrices). A nice overview of CPT symmetry can be found in \cite{CPT}. Notice that $\gamma_{5}\gamma_{0}$ does commute with the general Dirac fermion
propagator given at Eq.~\eqref{slrigh}. Then, since we already showed that ${\cal N}_3$ is a topological invariant, we can set $Z(p^2)=1$, as this $Z(p^2)$ for $p^2>0$ is a smooth deformation of $1$. One can again
appreciate this from non-perturbative functional or lattice computations; see e.g.~\cite{Braun:2014ata,Aguilar:2010cn,Alkofer:2008tt,Bashir:2012fs,Oliveira:2018lln,Bowman:2005vx,Burgio:2012ph,Furui:2006ks}. Therefore,
after some algebraic manipulations, we can reduce
\begin{eqnarray}  \label{fermion_top_inv}
{\mathcal{N}}_{3} &=& \frac{1}{24\pi^{2}} \varepsilon_{ijk} \mathrm{Tr}
\int d^{3}p \; \Bigg[ \gamma_{0}\gamma_{5} {\mathcal{G}} \left(
\partial_{p_{i}} {\mathcal{G}}^{-1} \right) {\mathcal{G}} \left(
\partial_{p_{j}} {\mathcal{G}}^{-1} \right) {\mathcal{G}} \left(
\partial_{p_{k}} {\mathcal{G}}^{-1} \right) \Bigg] \;,
\end{eqnarray}
to
\begin{eqnarray}
{\mathcal{N}}_{3} &=& \frac{4}{\pi} \int_{0}^{\infty} dp \; \frac{
p^{2} \left[ \mathcal{M}(p^{2}) - 2p^{2} \, \partial_{p^{2}}\mathcal{M}%
(p^{2}) \right] }{ \left[ p^{2} + \mathcal{M}^{2}(p^{2}) %
\right]^{2}} \;  \label{dhafdhf}
\end{eqnarray}
by making use of the Euclidean identities
\begin{eqnarray*}
\varepsilon_{ijk}\mathrm{Tr} \left[ \gamma^{5}\gamma^{0}\gamma_{i}\gamma_{j}\gamma_{k} \right]&=&-24 \;, \\
3p^{l}p^{k}\varepsilon_{ijk}\mathrm{Tr} \left[ \gamma^{5}\gamma^{0}%
\gamma_{l}\gamma^{i}\gamma^{j} \right] &=& -24\,\vec{p}\cdot\vec{p} \;.
\end{eqnarray*}
Let us now observe that Eq.~\eqref{dhafdhf} still carries the ``fingerprint'' of the topological invariance of the original expression \eqref{fermion_top_inv}. Indeed, even after performing the integral over the momentum
in order to arrive at Eq.~\eqref{dhafdhf}, this expression can be interpreted as a one-dimensional topological invariant, associated to the analytic structure of $\mathcal{M}(p^{2})$ itself. A fundamental observation is that one can also forget about how we did arrive at Eq.~\eqref{dhafdhf}, since such an expression is a perfectly well-defined winding number associated to the dynamical mass $\mathcal{M}(p^2)$ through the curve $\theta(p)$ defined in Eq.~\eqref{hgihgef} below, providing there is no pathological behavior of the integrand. In other words, if one is presented  the expression \eqref{dhafdhf} for the first time, without any previous knowledge of Eq.~\eqref{fermion_top_inv}, one is still able to show that this quantity \eqref{dhafdhf} is a well-defined one-dimensional topological invariant. More precisely,
\begin{eqnarray}
{\mathcal{N}}_{3} &=&
\frac{1}{2\pi}
\int_{0}^{2\pi} \, \mathrm{d}\theta
~=~ 1
\;,
\label{grrpef}
\end{eqnarray}
where we have defined the function $\theta(p)$ such that
\begin{eqnarray}
\theta(p) ~=~
4\left[
\frac{\mathcal{M}(p)/p}{1 + \left(\mathcal{M}(p)/p\right)^{2}} + \Arctan\left(\mathcal{M}(p)/p\right)
\right]
\;.
\label{hgihgef}
\end{eqnarray}
The significant advantage of this observation is that Eq.~\eqref{dhafdhf} can be considered as a legitimate invariant in itself for any dynamical (fermionic or bosonic) mass functions (that satisfy boundary conditions).

In the following we present a detailed analysis of the topological classification of a Dirac fermionic system whose dynamical mass reads
\begin{equation}
\mathcal{M}(p^{2})~=~\frac{M^{3}}{p^{2}+m^{2}}+\mu
\;.
\label{rbngoo}
\end{equation}
For the obtained values of parameters coming from a lattice fit \cite{Furui:2006ks,Dudal:2013vha,Rojas:2013tza}, $(M^{3},m^{2},\mu)$, (always keeping that order) all of them are positive and, therefore, the result for the topological invariant is always $\mathcal{N}_{3}=1$, as one can readily verify. However, depending on possible other configurations of these parameters, other values of ${\cal N}_3$ can emerge. This ``pathological'' behavior could imply phase transitions, provided in such a case $\delta\mathcal{N}_{3}=0$ cannot be guaranteed for a small variation of the parameters. This is translated into the fact that there are configurations, characterized by an equation on the space of parameters $f(M^{3},m^{2},\mu)=0$, which are unstable under small variation of the parameters. These surfaces can be realized as {\it phase boundaries} \cite{Goldenfeld}. One concrete example of such phase boundary is shown in Fig.~\ref{surface_fig}, where the two-dimensional surface is characterized by the equation $M^{3}+m^{2}\mu=0$ on the three-dimensional space of parameters\footnote{One can check that this condition translates into the Green function having a zero mass pole.}. On such surface the
topological invariant is $\mathcal{N}_{3}=0$. Notice that this is also the topological invariant's value in the chirally invariant case of a simple massless Dirac quark propagator.   As such, the topological number $\mathcal{N}_3$ can discriminate between chirally symmetric and chirally broken phases.

Away from this critical surface, the $\mathcal{N}_3$ jumps. To set the mind, working in unspecified mass units, for $(-1,1,1)$ we have $\mathcal{N}_3=0$;  for $(-1.01,1,1)$, $\mathcal{N}_3=-1$ and for $(-0.99,1,1)$, $\mathcal{N}_3=+1$. For all of these parameter sets, the propagator poles are located on the negative real axis, whilst the mass function remains positive for $p^2>0$. Stated otherwise, all of them could in principle be used to try to fit the non-perturbative quark mass function as found on lattice; see e.g.~\cite{Bowman:2005vx,Oliveira:2019erx}. In future work, it would be interesting to compute $\mathcal{N}_3$ directly based on lattice data, after which all kinds of approximations to it, for instance born out of Dyson-Schwinger/functional renormalization group equations mentioned already before, or approaches likes \cite{Alkofer:2003jj,Pelaez:2014mxa},  can be tested if they belong to the same homotopy class or not.

\begin{figure}[tbp]
\begin{center}
\includegraphics[width=1.0\textwidth,angle=0]{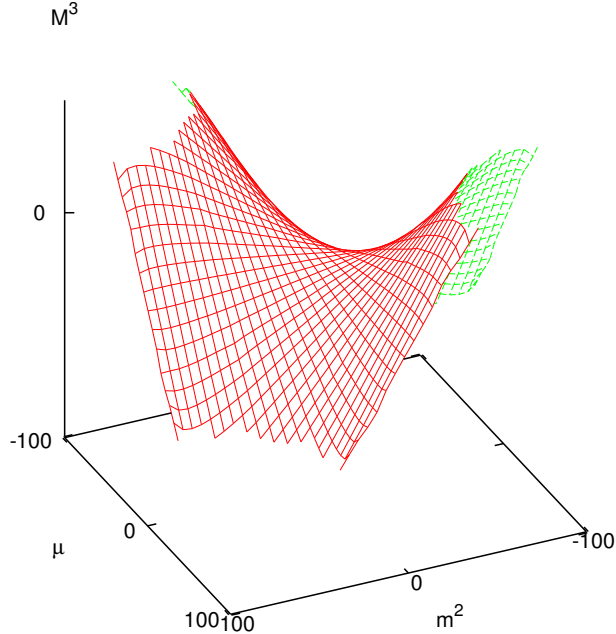}
\caption{A two-dimensional surface in the space of parameters where the topological invariant is $\mathcal{N}_{3}=0$, but any small perturbation will give a
nonzero result. In this case, the phase boundary is characterized by the equation $M^{3}+m^{2}\mu=0$.}
\label{surface_fig}
\end{center}
\end{figure}

More general, if the parameters take (critical) values such that the denominator of the integrand in Eq.~\eqref{dhafdhf} has singular behavior for positive values of $p^{2}$, there will be a transition $\Delta\mathcal{N}_{3}\neq0$ of the topological invariant for a small variation of the critical parameters. As
$\mathcal{N}_{3}$ is an integer number, this transition must come with another integer value for the topological number.


\section{A generalized topological invariant}
\label{Section_Generalized_TI}

In this Section, we propose to extend the momentum space topological invariant \eqref{dhafdhf} in order to make it sensitive to complex poles of the mass function $\mathcal{M}$, of which we always assume to know the analytic structure\footnote{Evidently, this is a big assumption. For example, a Monte Carlo based lattice computation will never give direct access to the full analytic structure in the complex momentum squared plane. But,
using dedicated inversion schemes can shed light on this anyhow, see e.g.~\cite{Fischer:2017kbq}, in some cases also (numerical) solutions over the complex momentum squared plane of the Dyson-Schwinger equations \cite{Maris:1995ns}\cite{Fischer:2017kbq} can be obtained. }. Such an extension is a topological invariant, applicable either to fermionic or to bosonic systems.

This quantity is defined by the following integral in the complex plane:
\begin{equation}
\mathcal{N}_{\Gamma }~=~\frac{1}{2\pi\ii }
\oint_{\Gamma }\mathrm{d}z\;
4\ii\sqrt{z}\frac{\left[
\mathcal{M}(z)-2z\,\partial _{z}\mathcal{M}(z)\right] }{\left( z+\mathcal{M}%
^{2}(z)\right) ^{2}}
\;,
\label{topological_invariant_def}
\end{equation}
where $\Gamma $ is the contour defined to enclose all the possible complex poles of the integrand \eqref{topological_invariant_def} lying in the upper half
of the complex plane (see Fig.~\ref{complex_plane_fig}), while avoiding every possible real pole, branch point and branch cut. This new topological object is a relative of the topological object $N_W$ defined in \cite{Hayashi:2018giz}, with the difference that their object is defined in terms of the propagator itself, while our ${\cal N}_{\Gamma}$ depends on the mass function ${\cal M}(z)$, which in general can be defined by writing in full generality a fermion propagator as in \eqref{slrigh}, or for a bosonic propagator (stripping off all possible color/Lorentz tensorial structures)
\begin{equation}\label{dd1}
  D(p^2)=\frac{Z(p^2)}{p^2+{\cal M}^2(p^2)}
\end{equation}
where the same comments as before apply to the wave-function normalization function $Z(p^2)$ and mass function ${\cal M}(p^2)$.

Despite the similarity, it is not clear to us if both objects encode the same physical information. Furthermore, notice that the integrand of Eq.~\eqref{topological_invariant_def} has at least two branch points, $0$ and $\infty$, but depending on the analytical structure of ${\cal M}(z)$, the number of non-removable singularities can increase. Likewise, it seems that also the quantity defined in \cite{Hayashi:2018giz} is susceptible to the
appearance of new branch points, regarding the analytical expression of the propagator.

Therefore, we choose the closed contour $\Gamma$ as being composed of three paths with the following parametrization (see Fig.~\ref{complex_plane_fig}):
\begin{eqnarray}
&&
C(\alpha) = R\e^{i \alpha} \qquad \text{with} \qquad \pi - \frac{\varepsilon}{R} \leq \alpha \leq \frac{\varepsilon}{R}
\label{paramtr1}
\\
&&
L_{-}(t) = t + i\varepsilon
\qquad \text{with} \qquad  -R \leq t < 0
\label{paramtr2}
\\
&&
L_{+}(t) = t + i\varepsilon
\qquad \text{with} \qquad   0 \leq t \leq R
\label{paramtr3}
\;.
\end{eqnarray}
It is important to notice that Eq.~\eqref{topological_invariant_def} can be rewritten as the winding
number of some function $f(z)$, indeed,
\begin{eqnarray}
{\cal N}_{\Gamma} ~=~
\frac{1}{2\pi\ii}
\oint_{\Gamma} dz\; \frac{f'(z)}{f(z)}
\label{topinvhsigh}
~=~
-\frac{1}{2\pi}
\int_{\theta\left( f(\Gamma) \right)}
d\theta(z)
\label{wirgwigh}
\end{eqnarray}
with
\begin{eqnarray}
f(z) ~=~
\e^{- \ii\theta(z)}
\quad \;, \quad
\theta(z) ~=~
2\left[
\frac{\sqrt{z}{\cal M}(z)}{z + {\cal M}^{2}(z)} + \Arctan\left( \frac{{\cal M}(z)}{\sqrt{z}}\right)
\right]
\label{lrgrihg}
\;.
\end{eqnarray}
Notice that the $\Arctan$-term gives the relevant contribution to the integral \eqref{wirgwigh}, in a sense that an equally well defined topological object \eqref{wirgwigh} could be given by $\theta(z) =2 \Arctan\left( \frac{{\cal M}(z)}{\sqrt{z}}\right)$. From Eq.~\eqref{wirgwigh} one can clearly see that our ${\cal N}_{\Gamma}$ is the winding number of $f(z)$ given at Eq.~\eqref{lrgrihg}, and as such it represents the difference between the number of zeros and poles of $f(z)$ as the latter function is per assumption meromorphic within the contour $\Gamma$ and has no poles or zeros on the contour $\Gamma$.

\subsection{Invariance of $\mathcal{N}_{\Gamma}$ under small deformations of the parameters}

Now, in order to derive the boundary condition for ${\cal N}_{\Gamma}$ to be invariant under smooth variations of ${\cal M}$ (with respect to the
parameters of ${\cal M}$), we will rely on two procedures. In this Subsection, we use general complex analysis results, while the second argument, $cf.$ Appendix \ref{app}, makes direct use of expressing Eq. \eqref{topological_invariant_def} through the function $\theta(z)$ defined in Eq. \eqref{lrgrihg}.

If we consider the variation of $\mathcal{N}_{\Gamma}$, we get\footnote{The contour $\Gamma$ itself depends on the integrand, in the sense that $\Gamma$ is chosen in such way that it does not encircle or pierce any cuts or
has poles on it. As before, this means that the renormalization group related logarithms affecting the integrand are analytic within $\Gamma$ and as such can be considered as smooth deformations. Stated otherwise, these
logarithmic terms can again be neglected being irrelevant for determining the topological number ${\cal N}_\Gamma$.}
\begin{eqnarray*}
  \delta\mathcal{N}_{\Gamma} &=& \frac{2}{\pi} \oint_{\Gamma }\mathrm{d}z\;   \sqrt{z}\frac{\left[
\delta\mathcal{M}-2z\,\partial _{z}\delta\mathcal{M}\right] }{\left( z+\mathcal{M}^{2}\right) ^{2}}
- \frac{8}{\pi} \oint_{\Gamma }\mathrm{d}z\; \sqrt{z}\frac{\left[ \mathcal{M}-2z\,\partial _{z}\mathcal{M}\right]\mathcal{M}\delta\mathcal{M} }{\left(z+\mathcal{M}%
^{2}\right) ^{3}}
\;,
\end{eqnarray*}
where $\mathcal{M}$ is a function of $z$; we did not write it explicitly to avoid notational clutter. In order to have something proportional to $\delta\mathcal{M}$, we need to integrate by parts the term with
$\partial_{z}\delta\mathcal{M}$. This is achieved via
\begin{equation*}
-\frac{2\sqrt{z}z}{(z+\mathcal{M}^{2})^{2}} \; \partial_{z}\delta\mathcal{M} = \frac{3\sqrt{z}\delta\mathcal{M}}{(z+\mathcal{M}^{2})^{2}} -
\frac{4\sqrt{z}z(1+2\mathcal{M}\partial_{z}\mathcal{M})\delta\mathcal{M}}{(z+\mathcal{M}^{2})^{3}} - \frac{\partial}{\partial z}\left(\frac{2\sqrt{z}z\delta\mathcal{M}}{(z+\mathcal{M}^{2})^{2}}\right) \;.
\end{equation*}
Plugging this directly  into $\delta\mathcal{N}_{\Gamma}$, we obtain
\begin{equation}\label{variationNGamma}
  \delta\mathcal{N}_{\Gamma} \;=\; -\frac{4}{\pi} \oint_{\Gamma }\; \frac{\partial}{\partial z}\left(\frac{\sqrt{z}z\delta\mathcal{M}}{(z+\mathcal{M}^{2})^{2}}\right)\;\mathrm{d}z\;.
\end{equation}
The issue to prove that $\delta\mathcal{N}_{\Gamma}=0$ is the following: we must ascertain that the function $\frac{\partial}{\partial z}\left(\frac{\sqrt{z}z\delta\mathcal{M}}{(z+\mathcal{M}^{2})^{2}}\right)$ has a zero residue inside $\Gamma$, otherwise the zero value of the integral cannot be guaranteed. Therefore we consider that $\mathcal{M}$ has the form of a rational function (fermionic case), or the square root of some rational
polynomial (bosonic case); we thus ignore, without any loss of generality, the logarithmic tails, as explained before. Then $\frac{\sqrt{z}z\delta\mathcal{M}}{(z+\mathcal{M}^{2})^{2}}$ is a meromorphic function times a possible square root of some rational polynomial. As the contour $\Gamma$ explicitly avoids the branch points and branch cuts coming from such a square root, we can say that $\frac{\sqrt{z}z\delta\mathcal{M}}{(z+\mathcal{M}^{2})^{2}}$ is a meromorphic function inside the contour $\Gamma$. Therefore, we are under the hypothesis of Lemma \ref{lemma1}, at least inside $\Gamma$. A direct result of the lemma, with the proof sketched below, is that our $\frac{\partial}{\partial z}\left(\frac{\sqrt{z}z\delta\mathcal{M}}{(z+\mathcal{M}^{2})^{2}}\right)$ indeed has a zero residue inside $\Gamma$.
\begin{lemma}\label{lemma1}
Let $F(z)$ be a complex function with a pole at $z=a$ with finite multiplicity $m$ and analytic in the punctured disc $D'(a,r)=\{z\in\mathbb{C},\;0<|z-a|<r\}$. \\
Then $F'(z)$ has a pole with multiplicity $(m+1)$ and zero residue at $z=a$.
\end{lemma}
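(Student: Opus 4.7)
The plan is to expand $F$ in a Laurent series around $z=a$ and then differentiate term by term. Since $F$ is analytic on the punctured disc $D'(a,r)$ and has a pole of multiplicity $m$ at $a$, there exists a unique Laurent representation
\begin{equation*}
F(z) \;=\; \sum_{n=-m}^{\infty} c_{n}\,(z-a)^{n}, \qquad c_{-m}\neq 0,
\end{equation*}
converging absolutely on $D'(a,r)$ and uniformly on compact subsets thereof. The uniform convergence on compacta justifies term-wise differentiation, yielding
\begin{equation*}
F'(z) \;=\; \sum_{n=-m}^{\infty} n\, c_{n}\,(z-a)^{n-1}.
\end{equation*}

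Both conclusions can then be read off from this single expansion. The most singular term is $-m\, c_{-m}\,(z-a)^{-m-1}$, whose coefficient is nonzero because $m\geq 1$ and $c_{-m}\neq 0$; hence $F'$ has a pole at $a$ of multiplicity exactly $m+1$. For the residue, by definition $\mathrm{Res}_{z=a} F'$ is the coefficient of $(z-a)^{-1}$ in the Laurent series of $F'$, which corresponds to taking $n-1=-1$, i.e.\ $n=0$; the associated contribution is $0\cdot c_{0}=0$, so the residue vanishes.

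I would then emphasize that this vanishing is structural rather than accidental: a $(z-a)^{-1}$ contribution to $F'$ could only have arisen from differentiating a logarithmic term $c\,\log(z-a)$ in $F$, which is ruled out by the hypothesis that $F$ admits a single-valued Laurent expansion on $D'(a,r)$. This is precisely the mechanism that will kill the contour integral in Eq.~\eqref{variationNGamma} pole by pole inside $\Gamma$. The only technical point worth checking — and the only place where something could go wrong — is the legality of the term-wise differentiation of the Laurent series, which is a classical consequence of uniform convergence on compact subsets of the punctured disc; no genuine obstacle arises.
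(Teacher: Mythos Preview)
Your proof is correct and follows essentially the same approach as the paper: both expand $F$ in its Laurent series about $z=a$, differentiate term by term, and read off that the leading singular term is $-m\,c_{-m}(z-a)^{-m-1}$ while the $(z-a)^{-1}$ coefficient vanishes because it comes from the $n=0$ term. Your version is slightly more explicit in justifying term-wise differentiation via uniform convergence on compacta, which the paper leaves implicit.
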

\begin{proof}
As the multiplicity of the pole is $m$, we may consider the Laurent series of $F(z)$ in $D'(a,r)$ (see for instance \cite{Ablowitz_Fokas})
\begin{equation*}
F(z)=\sum\limits_{N=-m}^{+\infty} C_{n}(z-a)^{n}=\sum\limits_{n=1}^{m} \frac{C_{-n}}{(z-a)^{n}}+\sum\limits_{n=0}^{+\infty} C_{n}(z-a)^{n}\;.
\end{equation*}
The coefficient $C_{-1}$ is the residue of $F(z)$ at $z=a$. Therefore,
\begin{eqnarray*}
F'(z)&=&-\sum\limits_{n=1}^{m} \frac{nC_{-n}}{(z-a)^{n+1}}+\sum\limits_{n=1}^{+\infty} nC_{n}(z-a)^{n-1}
    =\sum\limits_{n=2}^{m+1} \frac{B_{n}}{(z-a)^{n}}+\sum\limits_{n=0}^{+\infty} D_{n}(z-a)^{n} \;,
\end{eqnarray*}
where $B_{n}= -(n-1)C_{-n+1}$ and $D_{n}= (n+1)C_{n+1}$. Thus
\begin{equation}\label{Laurent_f'}
F'(z)=\sum\limits_{n=-m-1}^{+\infty} E_{n}(z-a)^{n}\;,
\end{equation}
with $E_{n}=B_{-n}$ for $n\leq-2$, $E_{n}=D_{n}$ for $n\geq0$ and $E_{-1}=0$. Eq. \eqref{Laurent_f'} is, by construction, the Laurent series of $F'(z)$ at $z=a$. Then, $z=a$ is a pole of  multiplicity $(m+1)$, as the series
starts with a term proportional to $(z-a)^{-m-1}$. It also has a residue equal to zero, as $E_{-1}=0$.
\end{proof}

\section{Applications}
\label{Section_Applications}

In this Section we will apply our complex topological object \eqref{topological_invariant_def} on both fermionic and gluonic sectors as a matter of exercise, but, being the form of the propagator for Dirac fermions at finite temperature more complicated because of the absence of Lorentz invariance, the Dirac fermions will be treated at zero temperature while the gluonic case, only, will be treated at finite temperature.

\subsection{A Dirac quark propagator that fits lattice data}

Here we assume the analytic continuation $p^{2} \to z$ of the Dirac quark propagator that fits lattice data, whose $\mathcal{M}(p^{2})$ mass function is given by Eq. \eqref{rbngoo}. The boundary conditions \eqref{bound.cond.} for $\delta {\cal N}_{\Gamma} = 0$ are satisfied, namely,
\begin{eqnarray}
\lim_{|z|\to\infty} {\cal M}(z) ~=~ \mu
\qquad \text{and} \qquad
\lim_{|z|\to\infty} \delta {\cal M}(z) ~=~ \delta \mu
\;.
\end{eqnarray}
We must emphasize that, because of the shape of our contour $\Gamma$, every possible complex pole with positive real part lies within the surrounded region (see Fig. \ref{complex_plane_fig}). For the propagator on the
right hand side of Eq. \eqref{slrigh} and
with dynamical mass given by Eq. \eqref{rbngoo}, there will always be three poles, one of these being negatively real, and two complex conjugate ones. Notice that
for the present quark propagator there will not appear new branch points due to the specific structure of ${\cal M}(z)$.
\begin{figure}[tbp]
\begin{center}
\includegraphics[width=0.7\textwidth,angle=0]{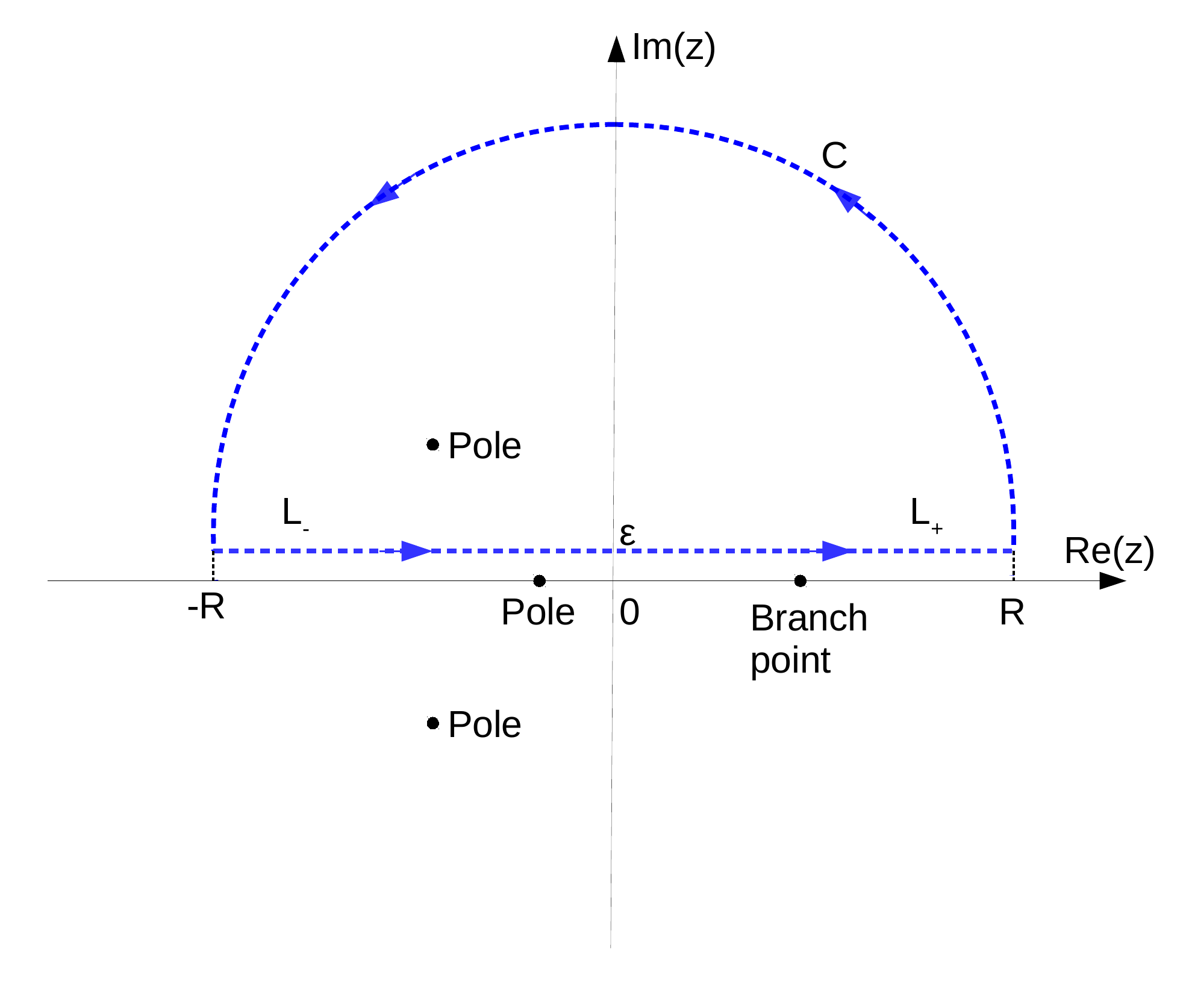}
\end{center}
\caption{A sketch of the procedure to obtain $\mathcal{N}_{\Gamma }$ in the complex plane. Here we consider a propagator displaying three poles (one real and two complex conjugates), and a possible new branch point that may appear. We choose to close the contour in the upper half of the complex plane in a way that every possible real pole and the branch cut is avoided. Thus, the contour $\Gamma$ is defined by the semi-circle $C$ of radius $R$ plus the horizontal
paths $L_{-}(z)$ and $L_{+}(z)$, which are lifted up by an imaginary infinitesimal, $i\varepsilon$.}
\label{complex_plane_fig}
\end{figure}

For general values of the parameters $(M^{3},m^{2},\mu)$, we find ${\cal N}_{\Gamma} = -2$ and ${\cal N}_{\Gamma} = +2$. More precisely, ${\cal N}_{\Gamma}=-2$  was associated to a negative mass at zero momentum, and also with a massless quark at zero momentum evolving to a strictly negative mass. It is possible to find combinations of the parameters so that the mass function is divergent at some point. In these cases, if the mass starts from negative values (considering $p^{2}$ from zero to positive values), ${\cal N}_{\Gamma}$ also acquires the value $-2$. On the other hand, we verified that ${\cal N}_{\Gamma}=+2$ is associated to configurations of the parameters such that the dynamical mass is positive at zero momentum, and also if it is a massless quark at zero momentum with an increasing positive mass. For example, for the specific values of $(M^{3},m^{2},\mu)$ that makes the quark propagator fit to lattice data, \cite{Furui:2006ks,Dudal:2013vha}, our topological invariant acquires the positive value $+2$.

\subsection{A gluon propagator that fits lattice data}

In order to explicitly show that our generalized topological object \eqref{topological_invariant_def} can in fact be applied to the topological analysis
of bosonic relativistic quantum fields within the MST framework, we perform here the topological classification of the space defined by the two-point Green function of gluons, whose analytic expression is one that has been used in literature to fit lattice data at zero and finite temperature, according to \cite{Aouane:2011fv,Cucchieri:2012gb,Dudal:2013vha,Dudal:2018cli}. Such analytic expression can be derived by means of the refined Gribov-Zwanziger framework, see e.g. \cite{Dudal:2008sp,Dudal:2019ing}.

Assuming the particular notation of \cite{Aouane:2011fv}, the gluon propagator reads,
\begin{eqnarray}
D(p^{2}) ~=~
\frac{ c(1 + d\,p^{2})}{p^{4} + 2r^{2}p^{2} + r^{4} + b^{2}} ~=~
 \frac{cd}{p^{2} + {\cal M}^{2}(p^{2})}
\;.
\label{gluonprop}
\end{eqnarray}
Performing the analytic continuation $p^{2}\to z$ to the complex plane of the mass function, ${\cal M}$, one has
\begin{eqnarray}
{\cal M}^{2}(z) ~=~
\frac{ z(2r^{2}d - 1) + (r^{4} + b^{2})d}{1 + dz}
\;.
\label{easdgh4g}
\end{eqnarray}
Notice that the boundary conditions \eqref{bound.cond.} are satisfied,
\begin{eqnarray}
\lim_{|z| \to \infty}{\cal M}(z) ~=~ \sqrt{\frac{ 2r^{2}d - 1}{d}}
\quad \text{and} \quad
\lim_{|z| \to \infty}\delta{\cal M}(z) ~=~ \frac{4rd^{\frac{5}{2}} \, \delta r + \mathrm{d}^{\frac12} \, \delta d}{d^{2}(2r^{2}d -1)^{\frac12}}
\;,
\end{eqnarray}
so that ${\cal N}_{\Gamma}$ is indeed a topological invariant under smooth variation of ${\cal M}$ with respect to the parameters $(r,\ b,\ d,\ c)$.

As mentioned before, since our proposed ${\cal N}_{\Gamma}$ is constructed in the complex plane and explicitly depends on the expression of the
(analytic continuation of the) mass function of the propagator ${\cal M}(p^{2})$, it is quite possible that non-removable singularities appear, others
than $0$ and $\infty$, as in the case of the present example. Given the expression of the complex mass function ${\cal M}(z)$, at Eq. \eqref{easdgh4g}, one can check that the integrand of
Eq. \eqref{topological_invariant_def} has also the new branch points $-\frac{1}{d}$ and
$\frac{-b^2 d - d r2^2}{-1 + 2 d r^2}$, which are real and, therefore, lie outside the region bounded by $\Gamma$.

In this example we restrict ourselves to the finite temperature results of \cite{Aouane:2011fv}. The values of $(r,~b,~d,~c)$ are taken from
their Table II. Namely, the authors of \cite{Aouane:2011fv} assessed the following values of temperature (in unities of the critical temperature):
$T/T_{c} = 0.65$; $T/T_{c} = 0.74$; $T/T_{c} = 0.86$; $T/T_{c} = 0.99$; $T/T_{c} = 1.20$; $T/T_{c} = 1.48$; $T/T_{c} = 1.98$; and $T/T_{c} = 2.97$. For
all the temperature values with the exception of the highest three, they set $b=0.0$. Notice that for $T/T_{c} = 1.20$ (clearly above the critical
temperature) the parameter $b$ is also set to zero.

For the values reported in \cite{Aouane:2011fv}, our topological invariant acquires the following values: ${\cal N}_{\Gamma} = 0$ whenever
$b=0$ , since the propagator then displays only a (negative) real (double) pole. Such configuration appears for temperatures $T/T_{c} = 0.65$;
$T/T_{c} = 0.74$; $T/T_{c} = 0.86$; $T/T_{c} = 0.99$; $T/T_{c} = 1.20$. On the other hand, we find ${\cal N}_{\Gamma} = +2$ for the three highest temperatures $T/T_{c} = 1.48$;
$T/T_{c} = 1.98$; and $T/T_{c} = 2.97$ (where $b \neq 0$). Therefore, these results suggest a  phase transition at a critical temperature
somewhere between $T/T_{c} = 1.20$ and $T/T_{c} = 1.48$, different from the deconfinement phase transition found by the authors of
\cite{Aouane:2011fv}.

Interestingly, the authors of \cite{Hayashi:2018giz} found $N_W = 0$ for gauge propagators displaying only real poles; and $N_W = -2$ for gauge propagators with a set of complex conjugate poles. Here, again, we point out
that our topological invariant assumes the same absolute values as the one of
\cite{Hayashi:2018giz}, despite that  they do not necessarily represent  the same topological quantity. Therefore, following a completely different approach from the one in \cite{Hayashi:2018giz} (although both are aimed
to the momentum space topological analysis of the gauge field two-point Green's function), we could find a  phase transition between different regimes of the gauge
propagator, in agreement with \cite{Hayashi:2018giz}.

\section{Conclusions}
\label{Section_Conclusions}

We developed a topological classification of relativistic quantum systems within the momentum space topology framework, starting with the usual
topological invariant ${\cal N}_{3}$ as applied to Dirac fermions. This topological invariant can experience integer jumps when one crosses different open regions, which are limited by \emph{phase boundaries}. This change can be related to the non-analyticities appearing in the integrand of \eqref{dhafdhf}.

Afterwards, we proposed a new topological invariant, ${\cal N}_{\Gamma}$, that is sensitive to the existence of complex poles of the propagator of the
relativistic quantum field, be it fermion or boson. This topological invariant was constructed by performing the analytic continuation $p^{2} \to z$ to the complex plane of the usual ${\cal N}_{3}$. As a result, ${\cal N}_{\Gamma}$ depends on the closed contour $\Gamma$ and on the analytic expression of ${\cal M}(z)$. In order to ensure the invariance of ${\cal N}_{\Gamma}$ with respect to smooth variations of ${\cal M}(z)$, we found that ${\cal M}(z)$ and $\delta {\cal M}(z)$ must be constant in the limit $|z| \to \infty$. We applied this topological number ${\cal N}_{\Gamma}$ to a model Dirac quark propagator whose
analytic expression can fit quite well corresponding lattice data. In addition, we also applied it to a gluon propagator form that can fit zero and finite temperature lattice data. We made use of the rational function fits according to \cite{Aouane:2011fv}, and we could find a phase transition at a temperature within the range $T/T_{c} = 1.20$ and $T/T_{c} = 1.48$, which is slightly above the thermodynamic critical (deconfinement) temperature found by the authors.

While Eq. \eqref{topinv} has the clear physical meaning of being related to a non-dissipative conductivity (Hall, spin Hall, etc,  $cf.$ \cite{Zubkov:2019amq,Zhang:2019zqa}) at this stage the physical interpretation of our invariant is less obvious. What is clear is that it is sensitive to the presence of complex conjugate poles thus, as shown in refs.~\cite{Hayashi:2018giz} and \cite{Cyrol:2018xeq}, it ``measures'' the possible occurrence of a violation of the ``reflection positivity condition''  (the Euclidean counterpart to the positive definiteness of the norm in the Hilbert space of the corresponding Wightman quantum field theory, see the Osterwalder-Schrader axioms \cite{Osterwalder:1973dx,Osterwalder:1974tc}) in terms of the Schwinger function. Violation of the reflection positivity condition is regarded as a signal for gluon confinement (see Section VI of \cite{Hayashi:2018giz} and Section V and VI of \cite{Cyrol:2018xeq}, as well as \cite{Krein:1990sf}.).

For the particular propagator \eqref{gluonprop}, depending on the value of the parameters, we will encounter either two (possibly degenerate) real poles or two complex conjugate ones. Only in the latter case, we will have a contribution to the topological invariant as defined in expression \eqref{topological_invariant_def}. Moreover, in the presence of a set of such poles, it has been shown explicitly in e.g.~\cite{Dudal:2008sp,Roberts:1994dr,Gao:2017uox} that the Schwinger function,
\begin{equation}\label{schwinger}
  C(t)=\frac{1}{2\pi} \int_{-\infty}^{+\infty} dp e^{ipt} D(p^2)\,,
\end{equation}
is no longer positive definite, in accordance with lattice measurement of the same quantity, see for example \cite{Cucchieri:2004mf,Bowman:2007du}.

For these reasons we believe that our topological invariant proposed in equation \eqref{topological_invariant_def} may be interpreted as an indication of whether the quantum system is in a confined or deconfined phase. However, for the moment, this is only an \textit{intuition} without a rigorous analytic proof, which deserves further investigation.

For future work, it would be interesting to use improved fits to more recent finite temperature gluon data, to see if the phase transition coincides with the deconfinement transition. That such is possible can be inferred from the preliminary fitting report of \cite{Cucchieri:2012gb}, where in contrast to the here used numbers of \cite{Aouane:2011fv}, there are  complex poles below (up to $T=0$) and around $T_c$. More takes on the analytic structure of gluon and/or quark propagators can be found in $e.g.$ \cite{Dudal:2013yva,Siringo:2016jrc,Lowdon:2017gpp,Silva:2017feh,Ilgenfritz:2017kkp,Cyrol:2018xeq,Kondo:2019rpa,Dudal:2019gvn,Binosi:2019ecz}. Although these rational function fits never capture the renormalization group-controlled logarithmic tails, we explained how our construct is insensitive to these structures anyhow.

In addition, it would be rather interesting to follow the temperature-evolution, if any, of the quark topological number based on the rather recent lattice output of \cite{Oliveira:2019erx}, to see if one can find a topological signal of the deconfinement and/or chiral transition in the quark sector. We hope to come back to these issues in the foreseeable future.

\section*{Acknowledgments}

P.P.~, D.D.~and L.R.~acknowledge the warm hospitality of the Centro de Estudios Cient\'ificos (CECs) in Valdivia, Chile, during different stages of this work. P.~P. is supported by the project High Field Initiative (CZ.02.1.01/0.0/0.0/15\_003/0000449) from the European Regional Development Fund. I.~F.~J. is supported by Fondecyt Grant No. 3170278. This work was partially supported by the research Grant Number 2017W4HA7S ``NAT-NET: Neutrino and Astroparticle Theory Network'' under the program PRIN 2017 funded by the Italian Ministero dell' Istruzione, dell' Universit\`a e della Ricerca (MIUR) and also by the Initiative ``TAsP: Theoretical Astroparticle Physics'' funded by the Italian Istituto Nazionale di Fisica Nucleare (INFN). This work has been funded by the Fondecyt Grant No. 1160137. The Centro de Estudios Cient\'ificos (CECs) is funded by the Chilean Government through the Centers of Excellence Base Financing Program of Conicyt.

\appendix

\section{Alternative proof that $N_\Gamma$ is a homotopic invariant}\label{app}

For a second, more explicit, demonstration of the invariance of Eq.~\eqref{topological_invariant_def}, we split our contour as $\Gamma =
C(\alpha) + L_{-}(t) + L_{+}(t)$ at the limit $R \to \infty$, as depicted in Fig.~\ref{complex_plane_fig}, writing ${\cal
N}_{\Gamma}$ as
\begin{eqnarray}
{\cal N}_{\Gamma} ~=~
-\frac{1}{2\pi}
\lim_{R \to \infty}
\left[
\int_{ C (\alpha) }
\mathrm{d}\theta(z)
+
\int_{ L_{-}(t) }
\mathrm{d}\theta(z)
+
\int_{ L_{+}(t) }
\mathrm{d}\theta(z)
\right]
\label{gh2ngso}
\,.
\end{eqnarray}
Given the parameterizations \eqref{paramtr1}, \eqref{paramtr2} and \eqref{paramtr3}, one has,
\begin{equation}\label{gh2ngso2}
{\cal N}_{\Gamma} ~=~
-\frac{1}{2\pi}
\Bigg[
\int_{ \pi - \frac{\varepsilon}{R} }^{\frac{\varepsilon}{R}}
\frac{\mathrm{d}}{\mathrm{d}\alpha}\theta ( R\e^{i\alpha} )
\, \mathrm{d}\alpha
+
\int_{ -R }^{ 0 }
\frac{\mathrm{d}}{\mathrm{d}t}\theta(t + i\varepsilon)
\, \mathrm{d}t
+
\int^{ R }_{ 0 }
\frac{\mathrm{d}}{\mathrm{d}t}\theta(t + i\varepsilon)
\, \mathrm{d}t
\Bigg]
\;,
\end{equation}
with $R \to \infty$. Let us focus on the integral over the contour piece $C(\alpha)$. As we avoid poles and branch points/cuts on $\Gamma$, such an integral reduces to
\begin{eqnarray*}
-\frac{1}{2\pi} \, \lim_{\R\to\infty}  \theta ( R\e^{i\alpha} )  \Bigg\vert_{\pi - \frac{\varepsilon}{R}}^{\frac{\varepsilon}{R}}
&=&
-\frac{1}{\pi}
\lim_{R\to\infty}
\left[
\frac{R^{\frac12}\e^{\ii\frac{\alpha}{2}}{\cal M}(R\e^{\ii\alpha})}{R\e^{\ii\alpha} + {\cal M}^{2}(R\e^{\ii\alpha})}
+ \Arctan\left( \frac{{\cal M}(R\e^{\ii\alpha})}{R^{\frac12}\e^{\ii\frac{\alpha}{2}}}\right)
\right]\Bigg\vert^{\frac{\varepsilon}{R}}_{\pi - \frac{\varepsilon}{R}} \\
&=& 0
\;,
\end{eqnarray*}
if ${\cal M}(z)$ is at most constant at $|z| \to \infty$. This means that our topological invariant reduces to
\begin{eqnarray*}
{\cal N}_{\Gamma} ~=~
-\frac{1}{2\pi}
\lim_{R \to \infty}
\left[
\int_{ L_{-}(t) }
\mathrm{d}\theta(z)
+
\int_{ L_{+}(t) }
\mathrm{d}\theta(z)
\right]
\,.
\end{eqnarray*}
The variation $\delta_{_{\cal M}} {\cal N}_{\Gamma}$, due to $\delta {\cal M}(z)$, on Eq.~\eqref{gh2ngso2}, lead us to
{\small
\begin{eqnarray*}
\delta_{_{\cal M}} {\cal N}_{\Gamma}  ~=~
-\frac{1}{2\pi}
\Bigg[
\int_{\frac{\varepsilon}{R}}^{\pi - \frac{\varepsilon}{R}}
\frac{\mathrm{d}}{\mathrm{d}\alpha} \delta_{_{\cal M}}  \theta(R\e^{i\alpha}) \, \mathrm{d}\alpha
&+&
\int_{ -R }^{ 0 }
\frac{\mathrm{d}}{\mathrm{d}t} \delta_{_{\cal M}}  \theta(t + i\varepsilon)
\, \mathrm{d}t
-
\int^{ R }_{ 0 }
\frac{\mathrm{d}}{\mathrm{d}t} \delta_{_{\cal M}}  \theta(-t + i\varepsilon)
\, dt
\Bigg]
\,.
\end{eqnarray*}
}
The variation $\delta_{_{\cal M}} \theta(z)$ reads
\begin{eqnarray}
\delta_{_{\cal M}}  \theta(z) ~=~
4 \left[
\frac{\delta {\cal M}}{\sqrt{z}  \left( 1+ \frac{{\cal M}^{2}}{z}  \right)}
\right]
\;,
\end{eqnarray}
just as Eq.~\eqref{variationNGamma}. Thus, as there are no singularities along the contour $\Gamma$, in order to ensure $\delta_{_{\cal M}} {\cal N}_{\Gamma} = 0$, the analytical continuation of the mass function ${\cal
M}(z)$ and its small variation $\delta {\cal M}(z)$ must be constant at the boundary $|z| \to \infty$, i.e.,
\begin{eqnarray}
\lim_{|z|\to\infty} {\cal M}(z) ~=~ M_{\infty}
\qquad \text{and} \qquad
\lim_{|z|\to\infty} \delta{\cal M}(z) ~=~ \delta M_{\infty}
\;.
\label{bound.cond.}
\end{eqnarray}

\bibliographystyle{apsrev4-1}
\bibliography{top_inv_biblio}

\end{document}